\renewcommand{\phi}{\varphi}
\title{Towards Symbolic Factual Change in DEL}
\author{Malvin Gattinger}
\institute{Institute for Logic, Language \& Computation, University of Amsterdam}
\begin{document}

\setcounter{page}{14}
\pagestyle{plain}
\renewcommand{\thispagestyle}[1]{}

\maketitle

\begin{abstract}
We extend symbolic model checking for Dynamic Epistemic Logic (DEL) with factual
change. Our transformers provide a compact representation of action models with
pre- and postconditions, for both S5 and the general case.
The method can be implemented using binary decision diagrams and we expect it to
improve model checking performance.
As an example we give a symbolic representation of the Sally-Anne false belief
task.

\keywords{Epistemic Logic, Symbolic Model Checking, Factual Change}
\end{abstract}

\section{Introduction}

Symbolic representation is a solution to the state explosion problem in model
checking. The idea is to not store models explicitly in memory, but to find more
compact representations which still allow the evaluation of formulas.
In~\cite{BEGS15:SMCDEL} it was shown that S5 Kripke models for Dynamic Epistemic
Logic (DEL) can be encoded symbolically using \emph{knowledge structures}.
They are of the form $(V,\theta,O)$ where $V$ is a set of atomic propositions
called \emph{vocabulary}, $\theta$ is a boolean formula called \emph{state law}
and $O_i \subseteq V$ are observational variables for each agent.
Notably, this symbolic representation preserves the truth of all DEL formulas,
including higher-order knowledge (see Section~\ref{sec:basics} below).

The framework was generalized in~\cite{BEGS17:SMCDELbeyond} in two ways: From
equivalences to arbitrary relations and from announcements to action models. The
latter can be represented by \emph{knowledge transformers} of the form
$(V^+,\theta^+,O^+)$. Analogous to the product update on Kripke
models~\cite{BalMosSol98:tlopa}, applying a transformer to a structure yields a
new structure.
However, knowledge transformers only change what agents know and not what is the
case --- they do not provide a symbolic equivalent of postconditions for factual
change as studied in~\cite{BenEijKoo06:lcc}.

In this paper we combine the two generalizations and add the missing components
to treat factual change. The result are \emph{belief transformers with factual
change} which for simplicity we will just call \emph{transformers}.

Possible worlds in a Kripke model get their meaning but not their identity via
a valuation function. In particular we can assign the same atomic truths to
different possible worlds.
In contrast, all states of a knowledge structure satisfy different atomic
propositions and can thus be identified with their valuation. This is what
makes structures symbolic and efficient to implement, but it complicates the
idea of changing facts, as the following minimal example shows.

\begin{example}\label{ex:coin}
Consider a coin lying on a table with heads up: $p$ is true and this is common
knowledge. Suppose we then toss it randomly and hide the result from agent $a$
but reveal it to agent $b$. Figure~\ref{fig:KripkeChange} shows a Kripke model
of this update.

\begin{figure}\centering
  \begin{tikzpicture}[node distance=11mm]
    \node (m) [circle,draw] {$\begin{array}{c}w\\p\end{array}$};
    \draw (m) edge [loop left] node {a,b} (m);
  \node (times) [right of=m, node distance=15mm] {$\times$};
    \node (a) [right of=times, node distance=19mm] {};
    \node (a1) [draw,above of=a] {$\begin{array}{c}a_1\\?\top\\p:=\bot\end{array}$};
    \node (a2) [draw,below of=a] {$\begin{array}{c}a_2\\?\top\\p:=\top\end{array}$};
    \draw (a1) edge [loop left] node {a,b} (a1);
    \draw (a2) edge [loop left] node {a,b} (a2);
    \draw (a1) edge [<->] node [left] {a} (a2);
  \node (eq) [right of=a, node distance=16mm] {=};
    \node (m') [right of=eq, node distance=16mm] {};
    \node (wa1) [circle,draw,above of=m'] {$\begin{array}{c}(w,a_1)\\\lnot p\end{array}$};
    \node (wa2) [circle,draw,below of=m'] {$\begin{array}{c}(w,a_2)\\p\end{array}$};
    \draw (wa1) edge [loop right] node {a,b} (wa1);
    \draw (wa2) edge [loop right] node {a,b} (wa2);
    \draw (wa1) edge [<->] node [left] {a} (wa2);
  \end{tikzpicture}
  \caption{Factual change on Kripke models.}\label{fig:KripkeChange}
\end{figure}
\end{example}

It is easy to find the following structures that are equivalent to the initial
and the resulting model, but how can we symbolically describe the update which
transforms one into the other?
\[ \begin{array}{rc}
         & (V = \{p\},\ \theta = p,\ O_a=\{p\},\ O_b=\{p\}) \\
  \times & ??? \\
       = & (V = \{p\},\ \theta = \top,\ O_a=\varnothing\ ,\ O_b=\{p\}) \\
\end{array} \]

The name of a resulting world $(w,a_1)$ makes clear that it ``comes from'' $w$.
But a state like $\varnothing$ does not reveal its history or any relation to
$\{p\}$.
For purely epistemic actions this is not a problem --- we only \emph{add}
propositions from $V^+$ to the state to distinguish different epistemic events.
But for factual change propositions from $V$ have to be modified and we need a
way to \emph{remove} them from states.

Our solution is to \emph{copy propositions}: We store the old value of $p$ in a
fresh variable $p^\circ$. Then we rewrite the state law and observations using
substitutions.

We proceed as follows.
Sections~\ref{sec:KripkeStuff}~and~\ref{sec:basics} summarize the relevant parts
of~\cite{BEGS17:SMCDELbeyond}, generalized to belief transformers.
We then add factual change in Section~\ref{sec:factchange} and show that
transformers are equivalent to action models in Section~\ref{sec:equi}.
The Sally-Anne task illustrates our framework in Section~\ref{sec:sallyanne} and
we finish with further questions in Section~\ref{sec:future}.

\begin{definition}[Languages and Notation]\label{def:notation}
We fix a finite set of agents $I$ denoted by $i$, $j$, etc.\ and use the letters
$V$ or $X$ for sets of atomic propositional variables denoted by $p$, $q$, etc.

For any set of propositions $X$ we write $\mathcal{L}_B(X)$ for the boolean
language given by the BNF
  $\phi ::= p  \mid  \lnot \phi  \mid  \phi \land \phi$ where $p \in X$.
Similarly, let $\mathcal{L}(X)$ be the epistemic language over $X$ given by
  $\phi ::= p  \mid  \lnot \phi  \mid  \phi \land \phi  \mid  \Box_i \phi$
  where $i \in I$.

Primes and circles denote fresh variables, for example $p'$ and $p^\circ$.
For sets of variables let $X' := \{ x' \mid x \in X \}$ and $X^\circ := \{ x^\circ \mid x \in X \}$.
We also extend this notation to formulas recursively, for example $(\Box_i p \land q)' = (\Box_i p' \land q')$.

We write $[p/\psi]\phi$ for the result of substituting $\psi$ for $p$ in $\phi$.
Given two sets of the same size $A$ and $B$ of atomic propositions, and implicitly
assuming an enumeration $A=\{a_1,\dots,a_k\}$ and $B=\{b_1,\dots,b_k\}$ we write
$\left[B/A\right]\phi$ for the result of substituting $a_i$ for $b_i$ in
$\phi$ in parallel for all $i$.

A boolean assignment is identified with its set of true propositional variables
and we write $\vDash$ for the standard satisfaction relation.
Boolean quantification is used as follows:
$\forall p \phi : = \left[p/\top\right]\phi \wedge \left[p/\bot\right]\phi$.
For any $A=\{p_1,\dots,p_n\}$, let $\forall A \phi := \forall p_1 \forall p_2 \dots \forall p_n \phi$.
To abbreviate that a specific subset of propositions is true, let
$A \sqsubseteq B := \bigwedge A \land \bigwedge \{ \lnot p \mid p \in B \setminus A \}$.
\end{definition}

Definition~\ref{def:notation} describes operations on boolean \emph{formulas}
which might not be efficient in practice. In any actual implementation of our
methods those should be replaced with operations on boolean \emph{functions}
represented as Binary Decision Diagrams (BDDs)~\cite{Bryant86:bdds}.
For example, boolean quantification should \emph{not} be implemented as an
abbreviation but can instead be done efficiently by eliminating quantified
variables from the BDD\@.

\section{Kripke Models and Action Models}\label{sec:KripkeStuff}

We quickly state the standard definitions for Kripke semantics of DEL\@. For a
general introduction see~\cite{DitHoekKooi2007:del} and for details on factual
change see~\cite{BenEijKoo06:lcc}.

\begin{definition}
A \emph{Kripke model} for $V$ is a tuple $\mathcal{M}=(W,\mathcal{R},\pi)$
where $W$ is the set of worlds, $\mathcal{R}_i \subseteq W \times W$ is a
relation for each $i$ and $\pi : W \to \mathcal{P}(V)$ is a valuation function.
A \emph{pointed} Kripke model is a pair $(\mathcal{M},w)$ where $w \in W$.

We interpret $\mathcal{L}(V)$ on pointed Kripke models as follows.

\begin{enumerate}
  \item $(\mathcal{M},w) \vDash p$ iff $p \in \pi(w)$
  \item $(\mathcal{M},w) \vDash \neg \phi$ iff not $(\mathcal{M},w) \vDash \phi$
  \item $(\mathcal{M},w) \vDash \phi \land \psi$ iff $(\mathcal{M},w) \vDash \phi$ and $(\mathcal{M},w)\vDash \psi$
  \item $(\mathcal{M},w)\vDash \Box_i \phi \text{ iff }
      \text{for all } v \in W:
        \text{If } w \mathcal{R}_i v \text{ then } (\mathcal{M},v) \vDash \phi$
\end{enumerate}
\end{definition}

The following definition describes action models and how they can be applied
to Kripke models. Our definition of postconditions differs from the standard
in~\cite{BenEijKoo06:lcc} because we only allow boolean formulas. This however
does not change the expressivity~\cite{DitmarschKooi2008:SemanticResults}.

\begin{definition}\label{def:product}
An \emph{action model} is a tuple $\mathcal{A} = (A,R,\mathsf{pre},\mathsf{post})$ where
  $A$ is a set of atomic events,
  $R_i \subseteq A \times A$ a relation for each $i$,
  $\mathsf{pre} : A \to \mathcal{L}(V)$ is a precondition function and
  $\mathsf{post} : A \times V \to \mathcal{L}_B(V)$ a postcondition function.

The \emph{product update} is defined by $\mathcal{M} \times \mathcal{A} := (W^\text{new}, \mathcal{R}_i^\text{new}, \pi^\text{new})$ where
\begin{itemize}
  \item $W^\text{new} := \{ (w,a) \in W \times A \mid \mathcal{M},w \vDash \mathsf{pre}(a) \}$
  \item $\mathcal{R}_i^\text{new} := \{ ((w,a),(v,b)) \mid \mathcal{R}_i w v \text{ and } R_i a b \}$
  \item $\pi^\text{new}((w,a)) := \{ p \in V \mid \mathcal{M},w \vDash \mathsf{post}_a(p) \}$
\end{itemize}
An \emph{action} is a pair $(\mathcal{A},a)$ where $a \in A$.
\end{definition}

\section{Belief Structures and Belief Transformers}\label{sec:basics}

We now present the definitions of belief structures and belief transformers.
The key idea is that instead of explicitly listing worlds we use a symbolic
representation: The set of worlds and the valuation function are replaced by
a vocabulary and a boolean formula called the state law. The set of states is
then implicitly given as all boolean assignments that satisfy the state law,
i.e.~a subset of the powerset of the vocabulary.
Moreover, also the epistemic relations can be encoded using boolean formulas
and the goal is to interpret the language on the resulting structures without
ever computing or listing the full set of states.
For more details and proofs, we refer the reader to~\cite{BEGS17:SMCDELbeyond}.

\begin{definition}
A \emph{belief structure} is a tuple
$\mathcal{F} = (V,\theta,\Omega)$ where
  $V$ is a finite set of atomic propositions called \emph{vocabulary},
  $\theta \in \mathcal{L}_B(V)$ is the \emph{state law} and
  $\Omega_i \in \mathcal{L}_B(V \cup V')$ are called \emph{observations}.
Any $s \subseteq V$ such that $s \vDash \theta$ is called a \emph{state} of $\mathcal{F}$.
A pair $(\mathcal{F},s)$ where $s$ is a state of $\mathcal{F}$ is called a \emph{scene}.

We interpret $\mathcal{L}(V)$ on scenes as follows.
\begin{enumerate}
  \item $(\mathcal{F},s)\vDash p$ iff $ s\vDash p$.
  \item $(\mathcal{F},s)\vDash \neg \phi$ iff not $(\mathcal{F},s)\vDash \phi$
  \item $(\mathcal{F},s)\vDash \phi \land \psi$ iff $(\mathcal{F},s)\vDash \phi$ and $(\mathcal{F},s)\vDash \psi$
  \item $(\mathcal{F},s)\vDash \Box_i \phi \text{ iff }
      \text{for all } t \subseteq V:
        \text{If } t \vDash \theta \text{ and } (s \cup t') \vDash \Omega_i \text{ then } (\mathcal{F},t) \vDash \phi$.
\end{enumerate}

We write $(\mathcal{F},s) \equiv_V (\mathcal{F}',s')$
iff these two scenes agree on all formulas of $\mathcal{L}(V)$.
\end{definition}

An interesting property of belief structures is that on a given structure all
epistemic formulas have boolean equivalents. The following translation reduces
model checking to boolean operations which is not possible on Kripke models.

\begin{definition}\label{def:boolEquiv}
For any belief structure $\mathcal{F} = (V, \theta, \Omega)$ and any formula
$\phi \in \mathcal{L}(V)$ we define its \emph{local boolean translation}
$\| \phi \|_\mathcal{F}$ as follows.

\begin{enumerate}
  \item For any primitive formula, let $\| p \|_\mathcal{F} := p$.
  \item For negation, let $\| \neg \psi \|_\mathcal{F} := \neg \| \psi \|_\mathcal{F}$.
  \item For conjunction, let $\| \psi_1 \land \psi_2 \|_\mathcal{F} := \| \psi_1\|_\mathcal{F} \land \| \psi_2 \|_\mathcal{F}$.
  \item For belief, let $\| \Box_i\psi \|_\mathcal{F} := \forall V' ( \theta' \rightarrow ( \Omega_i \rightarrow (\| \phi \|_\mathcal{F})' ) )$
\end{enumerate}
\end{definition}

\begin{theorem}[from~\cite{BEGS17:SMCDELbeyond}]
Definition~\ref{def:boolEquiv} preserves and reflects truth.
That is, for any formula $\varphi$ and any scene $(\mathcal{F},s)$ we have that
  $(\mathcal{F},s)\vDash \varphi $ iff $s\vDash \| \varphi \|_\mathcal{F}$.
\end{theorem}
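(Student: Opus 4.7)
The plan is to prove the statement by induction on the structure of $\varphi$. The atomic, negation, and conjunction cases are essentially immediate: the clauses in the scene semantics and the corresponding clauses in Definition~\ref{def:boolEquiv} are identical up to invoking the induction hypothesis, so each of these cases collapses to a one-line check. All the work sits in the modal case $\varphi = \Box_i \psi$.

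Before attacking the modal case, I would first establish an auxiliary \emph{priming lemma}: for every $\chi \in \mathcal{L}_B(V)$ and all $s, t \subseteq V$, the combined assignment $s \cup t' \subseteq V \cup V'$ satisfies $\chi'$ iff $t \vDash \chi$. This is a straightforward induction on $\chi$ using the recursive definition of the primed variant from Definition~\ref{def:notation}. It is the bookkeeping tool that lets us move freely between a world $t$ and its primed copy $t'$.

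For the modal case, unfold the right-hand side $s \vDash \forall V' (\theta' \to (\Omega_i \to (\|\psi\|_\mathcal{F})'))$ using the definition of boolean quantification: it holds iff for every assignment $t' \subseteq V'$ (equivalently, every $t \subseteq V$) the implication $\theta' \to (\Omega_i \to (\|\psi\|_\mathcal{F})')$ is satisfied by $s \cup t'$. For a fixed such $t$, the priming lemma gives $s \cup t' \vDash \theta'$ iff $t \vDash \theta$, and $s \cup t' \vDash (\|\psi\|_\mathcal{F})'$ iff $t \vDash \|\psi\|_\mathcal{F}$; satisfaction of $\Omega_i$ at $s \cup t'$ is directly the observation condition used in the scene semantics of $\Box_i$. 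Combining these three equivalences with the induction hypothesis $(\mathcal{F},t)\vDash \psi \Leftrightarrow t \vDash \|\psi\|_\mathcal{F}$ turns the unfolded right-hand side into exactly the clause ``for all $t \subseteq V$, if $t \vDash \theta$ and $(s \cup t') \vDash \Omega_i$ then $(\mathcal{F},t)\vDash \psi$'' that defines $(\mathcal{F},s) \vDash \Box_i \psi$.

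The main obstacle I anticipate is not conceptual but notational: one must be careful that a state $s \subseteq V$ is correctly lifted to an assignment on $V \cup V'$ that simultaneously evaluates the unprimed formula $\Omega_i$ (which mixes primed and unprimed variables by design) and the primed formulas $\theta'$ and $(\|\psi\|_\mathcal{F})'$, and that the syntactic definition $\forall V'\phi := \forall p_1 \cdots \forall p_n \phi$ via substitutions really does coincide with semantic quantification over all $t \subseteq V$. Once the priming lemma is pinned down, this bookkeeping becomes routine and the induction closes.
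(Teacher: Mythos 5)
Your proposal is correct: the induction on $\varphi$ with all the work in the $\Box_i$ case, supported by a priming lemma relating $s \cup t' \vDash \chi'$ to $t \vDash \chi$ and the observation that $\forall V'$ ranges exactly over the assignments $s \cup t'$ for $t \subseteq V$, is precisely the standard argument for this result. Note that this paper itself gives no proof---the theorem is imported from the cited reference---but your sketch matches the proof given there; the only point worth making explicit is that the induction hypothesis at $t$ is applied only under the guard $\theta'$, so $t$ is guaranteed to be a state of $\mathcal{F}$ and $(\mathcal{F},t)$ a genuine scene.
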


The following definition was only hinted at in~\cite{BEGS17:SMCDELbeyond}.
Belief transformers are like knowledge transformers, but instead of observed
propositions $O_i^+$ we use boolean formulas $\Omega_i^+$ to encode arbitrary
relations on $\mathcal{P}(V^+)$.

\begin{definition}\label{def:BelTransf}
A \emph{belief transformer} for $V$ is a tuple
  $ \mathcal{X} = (V^+, \theta^+, \Omega^+)$
where
  $V^+$ is a set of atomic propositions such that $V \cap V^+ = \varnothing$,
  $\theta^+ \in \mathcal{L}(V \cup V^+)$ is a possibly epistemic formula and
  $\Omega_i^+ \in \mathcal{L}_B(V \cup V^+)$ is a boolean formula for each $i \in I$.
A \emph{belief event} is a belief transformer together with a subset $x \subseteq V^+$, written as $(\mathcal{X},x)$.

The \emph{belief transformation} of a belief structure $\mathcal{F}=(V,\theta,\Omega)$
with $\mathcal{X}$ is defined by
  $\mathcal{F} \times \mathcal{X} := (V\cup V^+,\theta \land ||\theta^+||_\mathcal{F}, {\{ \Omega_i \land \Omega_i^+ \}}_{i \in I})$.
Given a scene $(\mathcal{F},s)$ and a belief event $(\mathcal{X},x)$, let
$(\mathcal{F},s) \times (\mathcal{X},x) := (\mathcal{F} \times \mathcal{X},s \cup x)$.
\end{definition}

The resulting observations are boolean formulas over the new double vocabulary
  $(V \cup V') \cup (V^+ \cup {V^+}') \ = \ (V \cup V^+) \cup (V \cup V^+)'$,
describing a relation between the new states which are subsets of $V \cup V^+$.

\section{Belief Transformers with Factual Change}\label{sec:factchange}

We now define transformation with factual change, adding the components $V_-$ and $\theta_-$.
Note that the belief transformers without factual change as discussed in the
previous section are exactly those transformers where $V_- = \varnothing$.

\begin{definition}\label{def:transform}
A \emph{transformer} for $V$ is a tuple
  $\mathcal{X} = (V^+, \theta^+, V_-, \theta_-, \Omega^+)$
where
\begin{itemize}
  \item $V^+$ is a set of fresh atomic propositions such that $V \cap V^+ = \varnothing$,
  \item $\theta^+$ is a possibly epistemic formula from $\mathcal{L}(V \cup V^+)$,
  \item $V_- \subseteq V$ is the \emph{modified subset} of the original vocabulary,
  \item $\theta_- : V_- \to \mathcal{L}_B(V \cup V^+)$ maps modified propositions to boolean formulas,
  \item $\Omega_i^+ \in \mathcal{L}_B(V^+ \cup {V^+}')$ are boolean formulas for each $i \in I$.
\end{itemize}
To transform $\mathcal{F}=(V,\theta,\Omega_i)$ with $\mathcal{X}$, let
  $\mathcal{F} \times \mathcal{X} := (V^\text{new}, \theta^\text{new}, \Omega_i^\text{new})$
where
\begin{enumerate}
  \item $V^\text{new} \ := \ V \cup V^+ \cup V_-^\circ$
  \item $\theta^\text{new} \ := \ \left[V_-/V_-^\circ\right]\left(\theta \land \|\theta^+\|_\mathcal{F} \right) \ \land \ \bigwedge_{q \in V^-} \left(q \leftrightarrow \left[V_-/V_-^\circ\right](\theta_-(q))\right)$
  \item $\Omega_i^\text{new} \ := \ \left(\left[V_-/V_-^\circ\right]\left[(V_-)'/(V_-^\circ)'\right]\Omega_i\right) \land \Omega_i^+$
\end{enumerate}

An \emph{event} is a pair $(\mathcal{X},x)$ where $x \subseteq V^+$.
Given $(\mathcal{F},s)$ and $(\mathcal{X},x)$, let
  $(\mathcal{F},s) \times (\mathcal{X},x) := (\mathcal{F} \times \mathcal{X}, s^\text{new})$
where the new actual state is
$s^\text{new} \ := \  (s \setminus V_-) \cup {(s \cap V_-)}^\circ \cup x \cup \{ p \in V_- \mid s \cup x \vDash \theta_-(p) \}$.
\end{definition}

To explain this definition, let us consider the components one by one.

First, the new vocabulary contains $V_-^\circ = \{ p^\circ \mid p \in V_- \}$.
These are fresh copies of the modified subset. We use them to keep track of the
old valuation.

Second, the new state law: A state in the resulting structure needs to satisfy
the old state law and the event law encoding the preconditions. For modified
propositions the old values have to be used, hence we apply a substitution to
both laws in the left conjunct.
Modified propositions are then overwritten in the right conjunct, using
$\theta_-$ which encodes postconditions.
As in Definition~\ref{def:product}, postconditions are evaluated in the old
model, hence we also substitute here.

Third, for the new observations we replace modified variables by their copies.
Two substitutions are needed because $\Omega_i^\text{new}$ is in a double vocabulary.
Old observations induce new ones via the state law. For example, if $q$ was
flipped publicly, then $q \leftrightarrow \lnot q^\circ$ is part of the new
state law and observing whether $q$ is equivalent to observing whether
$\lnot q^\circ$, i.e.~having observed $q$ in the original structure.
In the simpler S5 setting we would use
  $O_i^\text{new} \ := \ \left(\left[V_-/V_-^\circ\right]O_i\right) \cup O_i^+$.

Finally, the new actual state $s^\text{new}$ is the union of, in this order:
  propositions in the old state that have not been modified $(s \setminus V_-)$,
  copies of the modified propositions that were in the old state ${(s \cap V_-)}^\circ$,
  those propositions labeling the actual event $x$ and
  the modified propositions whose precondition was true in the old state $\{ p \in V_- \mid s \cup x \vDash \theta_-(p) \}$.

\begin{example}\label{ex:coinTransform}
We can now model the coin flip from Example~\ref{ex:coin} as follows.
Because we use the more general belief (instead of knowledge) structures, the
initial structure now has boolean formulas $\Omega_i$ instead of observational
variables $O_i$:
\[ (V = \{p\},\ \theta = p,\ \Omega_a = p \leftrightarrow p',\ \Omega_b = p \leftrightarrow p') \]
The following transformer models the coin flip visible to $b$ but not to $a$:
\[ (
  V^+ = \{q\},
  \ \theta^+ = \top,
  \ V_- = \{p\},
  \ \theta_-(p):= q,
  \ \Omega_a^+ = \top,
  \ \Omega_b^+ = q \leftrightarrow q'
) \]
The result of applying the latter to the former is this:
\[ (
  V = \{ p,q,p^\circ \},
  \ \theta = p^\circ \land (p \leftrightarrow q),
  \ \Omega_a = p^\circ \leftrightarrow {p^\circ}',
  \ \Omega_b =(p^\circ \leftrightarrow {p^\circ}') \land (q \leftrightarrow q')
) \]
\end{example}

\begin{example}
A publicly observable change $p:=\phi$ for a propositional formula $\phi$ is modeled by:
\[ (
  V^+ = \varnothing,
  \ \theta^+ = \top,
  \ V_- = \{p\},
  \ \theta_-(p):=\phi,
  \ \Omega_i^+=\top
) \]
\end{example}

DEL does not have temporal operators and agents never know the past explicitly.
Hence the old valuation is often irrelevant and the product update on Kripke
models does this ``garbage collection'' better than our transformation.
But we can eliminate propositions outside the original $V$ using the following Lemma.
A more thorough analysis of minimizing knowledge structures will be future work.

\begin{lemma}\label{lem:mini}
Suppose $\mathcal{F}$ uses the vocabulary $V \cup \{p\}$ and $p \not\in V$ is
determined by the state law (i.e.~$\theta \to p$ or $\theta \to \lnot p$ is a
tautology).
Then we can remove $p$ from the state law and observational BDDs to get a new
structure $\mathcal{F}'$ using the vocabulary $V$ such that
  $(\mathcal{F},s) \equiv_V (\mathcal{F}',s \setminus \{p\})$.
\end{lemma}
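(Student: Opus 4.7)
The plan is to construct the reduced structure explicitly, establish a bijection between states, and verify the equivalence by induction on formulas. Without loss of generality, I assume $\theta \to p$ is a tautology; the case $\theta \to \lnot p$ is symmetric (replace $\top$ by $\bot$ below). Then every state of $\mathcal{F}$ contains $p$. I define $\mathcal{F}' := (V, \hat{\theta}, \hat{\Omega})$ by the constant substitutions $\hat{\theta} := [p/\top]\theta$ and $\hat{\Omega}_i := [p/\top][p'/\top]\Omega_i$, which realises at the formula level what the lemma calls ``removing $p$'' at the BDD level (one step of unit propagation).

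First I would record the standard boolean substitution lemma: for any $\psi \in \mathcal{L}_B(V \cup \{p\})$ and any assignment $a$ with $p \in a$, we have $a \vDash \psi$ iff $a \setminus \{p\} \vDash [p/\top]\psi$. This immediately yields a bijection $s \leftrightarrow s \setminus \{p\}$ between the states of $\mathcal{F}$ and those of $\mathcal{F}'$, with inverse $u \mapsto u \cup \{p\}$.

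The main argument is then an induction on $\phi \in \mathcal{L}(V)$ showing $(\mathcal{F},s) \vDash \phi$ iff $(\mathcal{F}', s \setminus \{p\}) \vDash \phi$. Because $p \notin V$, the atomic, negation, and conjunction cases are immediate from the definitions. In the case $\Box_i \phi$, the original semantics quantifies over $t \subseteq V \cup \{p\}$ with $t \vDash \theta$ and $(s \cup t') \vDash \Omega_i$, while the reduced semantics quantifies over $u \subseteq V$ with $u \vDash \hat{\theta}$ and $((s \setminus \{p\}) \cup u') \vDash \hat{\Omega}_i$. The state bijection matches the two quantifications via $u = t \setminus \{p\}$; the substitution lemma applied twice to $\Omega_i$, using that $p \in s$ and $p \in t$ force both $p$ and $p'$ true in $s \cup t'$, matches the two relation conditions; and the inductive hypothesis closes the $\phi$-satisfaction check.

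I do not expect real difficulty: the only obstacle is the careful bookkeeping around the double-vocabulary substitution in $\Omega_i$, making explicit which of $p$ and $p'$ appear and why each is forced true in the relevant assignment. Once the substitution lemma is phrased for the two-variable case, the verification is mechanical, and the symmetric case $\theta \to \lnot p$ is obtained by swapping $\top$ and $\bot$ throughout.
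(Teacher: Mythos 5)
Your proposal is correct, and it is exactly the argument the lemma calls for: the paper states this lemma without proof, and ``removing $p$ from the BDDs'' is precisely the cofactoring $[p/\top]$ (resp.\ $[p/\bot]$) that you spell out at the formula level, followed by the routine induction whose only nontrivial case is $\Box_i$. The one point to make explicit when writing it up is that the inductive hypothesis must be quantified over all states of $\mathcal{F}$ (not just the fixed $s$), since the $\Box_i$ case applies it at the witnesses $t$; with that, your state bijection and the two applications of the substitution lemma (for $p$ via $p\in s$ and for $p'$ via $p\in t$) close the argument.
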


\begin{example}
The result from Example~\ref{ex:coinTransform} is $\equiv_{\{p,q\}}$ equivalent to:
  \[ (
    V = \{p,q\},
    \ \theta = p \leftrightarrow q,
    \ \Omega_a = \top,
    \ \Omega_b = q \leftrightarrow q'
  ) \]
\end{example}

\section{Equivalence and Expressiveness}\label{sec:equi}

We now show that transformers describe exactly the same class of updates as
action models. The main ingredients for the proof are the following Lemma and
two Definitions of how to go from transformers to action models and back.

\begin{lemma}[from~\cite{BEGS17:SMCDELbeyond}]\label{lem:morphism-nons5}
Suppose we have
  a belief structure $\mathcal{F}=(V,\theta,\Omega)$,
  a finite Kripke model $\mathcal{M}=(W,\pi,\mathcal{R})$ for the vocabulary $X \subseteq V$
  and a function $g: W \rightarrow \mathcal{P}(V) $ such that
\begin{enumerate}
  \item[C1] For all $w_1, w_2 \in W$ and $i \in I$ we have that
    $g(w_1)(g(w_2)') \vDash \Omega_i$ iff $w_1 \mathcal{R}_i w_2$.
  \item[C2] For all $w \in W$ and $p \in X$, we have that
    $p\in g(w)$ iff $p \in \pi(w)$.
  \item[C3] For every $s\subseteq V$,
    $s$ is a state of $\mathcal{F}$ iff $s=g(w)$ for some $w\in W$.
\end{enumerate}
Then, for every $\mathcal{L}(X)$-formula $\phi$ we have
  $(\mathcal{F},g(w))\vDash \phi$ iff $(\mathcal{M},w)\vDash \phi$.
\end{lemma}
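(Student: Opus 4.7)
The plan is to prove the equivalence by structural induction on the $\mathcal{L}(X)$-formula $\phi$. Since $X \subseteq V$, every atomic subformula of $\phi$ comes from $X$, so condition C2 applies directly at the base case. The atomic case is immediate from C2: $(\mathcal{F}, g(w)) \vDash p$ iff $p \in g(w)$ iff $p \in \pi(w)$ iff $(\mathcal{M}, w) \vDash p$. Negation and conjunction are handled compositionally by unfolding both semantic clauses and applying the induction hypothesis; these cases involve no real content.

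The only substantive case is $\phi = \Box_i \psi$. I would unfold the two satisfaction clauses side by side. On the belief structure side, $(\mathcal{F}, g(w)) \vDash \Box_i \psi$ means: for every $t \subseteq V$ with $t \vDash \theta$ and $g(w) \cup t' \vDash \Omega_i$, we have $(\mathcal{F}, t) \vDash \psi$. On the Kripke side, $(\mathcal{M}, w) \vDash \Box_i \psi$ means: for every $v \in W$ with $w \mathcal{R}_i v$, we have $(\mathcal{M}, v) \vDash \psi$. The task is to match these two universal quantifiers via $g$.

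For the ($\Rightarrow$) direction, I pick an arbitrary $v \in W$ with $w \mathcal{R}_i v$, set $t := g(v)$, and observe that $t$ is a state of $\mathcal{F}$ by C3 (so $t \vDash \theta$) and that $g(w) \cup g(v)' \vDash \Omega_i$ by C1. Hence $(\mathcal{F}, g(v)) \vDash \psi$ by assumption, and the induction hypothesis gives $(\mathcal{M}, v) \vDash \psi$. For the ($\Leftarrow$) direction, I pick an arbitrary state $t$ with $t \vDash \theta$ and $g(w) \cup t' \vDash \Omega_i$; by C3 there exists $v \in W$ with $g(v) = t$, and then C1 (used right-to-left) yields $w \mathcal{R}_i v$, so by assumption $(\mathcal{M}, v) \vDash \psi$ and by induction $(\mathcal{F}, t) \vDash \psi$.

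I do not expect a serious obstacle here: the three conditions C1, C2, C3 are each biconditionals that exactly line up the two halves of the modal semantic clause. The one subtle point worth stating carefully is that C3 must be used as a surjectivity statement to produce the witness $v$ in the ($\Leftarrow$) direction, and as a closure statement (every $g(v)$ is a valid state) in the ($\Rightarrow$) direction. Also, the induction hypothesis must be applied to $\psi$ at an arbitrary world $v$, not only at $w$, which is fine because C1--C3 are stated uniformly over all of $W$, so $g$ serves as the morphism witnessing the equivalence at every world simultaneously.
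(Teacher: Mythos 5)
Your proof is correct, and it is the standard argument: structural induction on $\phi$ with C2 handling atoms and C1 together with C3 aligning the two universal quantifiers in the $\Box_i$ case, exactly as in the source the paper cites. Note that this paper itself states the lemma without proof (it is imported from~\cite{BEGS17:SMCDELbeyond}), so there is no in-paper proof to diverge from; your argument matches the intended one, including the two correctly identified uses of C3 (closure of $g$-images under the state law for one direction, surjectivity onto states for the other).
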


\begin{definition}[$\mathsf{Act}$]\label{def:Act}
Given an event $(\mathcal{X} = (V^+, \theta^+, V_-, \theta_-, \Omega^+), x)$,
define an action $(\mathsf{Act}(\mathcal{X}) := (A,\mathsf{pre},\mathsf{post},R), a := x)$ by
\begin{itemize}
  \item $A := \mathcal{P}(V^+)$
  \item $\mathsf{pre}(a) := \left[a/\top\right]\left[(V^+ \setminus a)/\bot\right]\theta^+$
  \item $\mathsf{post}_a(p) := \left \{ \begin{array}{ll} \left[a/\top\right]\left[(V^+ \setminus a)/\bot\right]\left(\theta_-(p)\right) & \text{if } p \in V_- \\ p & \text{otherwise}\end{array} \right.$
  \item $R_i := \{ (a,b) \mid a \cup (b') \vDash \Omega_i^+ \}$
\end{itemize}
\end{definition}

\begin{definition}[$\mathsf{Trf}$]\label{def:Trf}
Consider an action $(\mathcal{A} = (A,\mathsf{pre},\mathsf{post},R),a_0)$.
Let $n := \mathsf{ceil}(\log_2|A|)$ and $\ell : A \to \mathcal{P}(\{q_1,\dots,q_n\})$ be an injective labeling function using fresh atomic variables $q_k$.
Then let $(\mathsf{Trf}(\mathcal{A}) := (V^+, \theta^+, V_-, \theta_-, \Omega^+),x:=\ell(a_0))$ be the event defined by
\begin{itemize}
  \item $V^+ := \{ q_1,\dots,q_n \}$
  \item $\theta^+ := \bigvee_{a \in A} \left( \mathsf{pre}(a) \land \ell(a) \sqsubseteq V^+ \right)$
  \item $V_- := \{ p \in V \mid \exists a : \mathsf{post}_a(p) \neq p \}$
  \item $\theta_-(p) := \bigvee_{a \in A} \left( \ell(a) \sqsubseteq V^+ \land \mathsf{post}_a(p) \right)$
  \item $\Omega_i^+ := \bigvee_{(a,b) \in R_i} \left( \ell(a) \sqsubseteq V^+ \land (\ell(b) \sqsubseteq V^+)' \right)$
\end{itemize}
\end{definition}

Besides these translations for the dynamic parts, we also use the translations
$\mathcal{M}(\cdot)$ and $\mathcal{F}(\cdot)$ from structures to models and
vice versa, as given in Definitions 18 and 19 of~\cite{BEGS17:SMCDELbeyond}.
Now everything is in place to state and prove our main result.
The following generalizes Theorem 4 in~\cite{BEGS17:SMCDELbeyond}.

\begin{theorem}
(i) Definition~\ref{def:Act} is truthful: For any scene $(\mathcal{F},s)$, any event $(\mathcal{X},x)$ and any formula $\phi$ over the vocabulary of $\mathcal{F}$ we have:
\[
  (\mathcal{F},s) \times (\mathcal{X},x) \vDash \phi
  \iff
  (\mathcal{M}(\mathcal{F}),s) \times (\mathsf{Act}(\mathcal{X}),x) \vDash \phi
\]
(ii) Definition~\ref{def:Trf} is truthful: For any pointed Kripke model $(\mathcal{M},w)$, any action $(\mathcal{A},a)$ and any formula $\phi$ over the vocabulary of $\mathcal{M}$ we have:
\[
  (\mathcal{M} \times \mathcal{A}, (w,a)) \vDash \phi
  \iff
  (\mathcal{F}(\mathcal{M}),g_\mathcal{M}(w)) \times (\mathsf{Trf}(\mathcal{A}),\ell(a))  \vDash \phi
\]
where $g_\mathcal{M}$ is from $\mathcal{F}(\mathcal{M})$ in Definition~19
of~\cite{BEGS17:SMCDELbeyond} and $\mathsf{Trf}(\mathcal{A})$ and $\ell$ are
from Definition~\ref{def:Trf} above.
\end{theorem}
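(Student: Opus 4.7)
The strategy for both parts is to invoke Lemma~\ref{lem:morphism-nons5}, exhibiting in each direction a function $g$ from worlds of the Kripke side to states of the transformer side and then verifying C1--C3. This follows the pattern of Theorem~4 in~\cite{BEGS17:SMCDELbeyond}; the new work is to track $V_-$, the copy vocabulary $V_-^\circ$, and the substitutions $[V_-/V_-^\circ]$ through each clause.

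For part~(i), let $g((w,a)) := (w \setminus V_-) \cup (w \cap V_-)^\circ \cup a \cup \{ p \in V_- \mid w \cup a \vDash \theta_-(p) \}$, which is literally the recipe for $s^\text{new}$ in Definition~\ref{def:transform}. Condition C2 is immediate after unfolding $\mathsf{post}_a$ from Definition~\ref{def:Act}: for $p \notin V_-$ both sides reduce to $p \in w$, and for $p \in V_-$ the instantiation $[a/\top][(V^+\setminus a)/\bot]\theta_-(p)$ evaluated at $w$ is exactly $\theta_-(p)$ evaluated at $w \cup a$. Condition C3 follows by inverting the construction: the right conjunct of $\theta^\text{new}$ pins down the values of modified propositions, while the left conjunct (after undoing $[V_-/V_-^\circ]$) encodes the old $\theta$ together with $\|\theta^+\|_\mathcal{F}$, so each state is uniquely the image of some $(w,a)$ with $w \vDash \mathsf{pre}(a)$.

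Condition C1 is the main bookkeeping obstacle. In $\Omega_i^\text{new} = ([V_-/V_-^\circ][(V_-)'/(V_-^\circ)']\Omega_i) \land \Omega_i^+$ evaluated at $g((w,a)) \cup g((v,b))'$, the substituted $\Omega_i$ conjunct reduces to $\Omega_i$ at $w \cup v'$ because the $V_-^\circ$ slots carry precisely $w \cap V_-$ and the $V \setminus V_-$ slots carry $w \setminus V_-$ (symmetrically for $v$ on the primed side); the second conjunct reduces to $a \cup b' \vDash \Omega_i^+$, which by Definition~\ref{def:Act} is exactly $R_i a b$. Together these match $\mathcal{R}_i^\text{new}$ of the product update. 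Since $g((s,x)) = s^\text{new}$, Lemma~\ref{lem:morphism-nons5} closes the proof.

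Part~(ii) is symmetric. Set $g'((w,a)) := (g_\mathcal{M}(w) \setminus V_-) \cup (g_\mathcal{M}(w) \cap V_-)^\circ \cup \ell(a) \cup \{p \in V_- \mid g_\mathcal{M}(w) \cup \ell(a) \vDash \theta_-(p)\}$. Now the disjunctions defining $\theta^+$ and $\theta_-$ in Definition~\ref{def:Trf} are mutually exclusive thanks to the label clauses $\ell(a) \sqsubseteq V^+$, so on any candidate state exactly one disjunct fires and the new state law reduces to the pre/postconditions of the product update; this delivers C2 and C3. The C1 step again requires the most care: the disjunction $\bigvee_{(a,b) \in R_i}(\ell(a) \sqsubseteq V^+ \land (\ell(b) \sqsubseteq V^+)')$ picks out precisely the pairs in $R_i$ when read off the label slots, while the substituted $\Omega_i$ conjunct reduces to $\Omega_i$ at $g_\mathcal{M}(w) \cup g_\mathcal{M}(v)'$ and then inherits $w\mathcal{R}_i v$ from C1 for $\mathcal{F}(\mathcal{M})$ already established in~\cite{BEGS17:SMCDELbeyond}. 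In both directions the subtlest step is aligning the $V_-^\circ$ substitutions on the primed and unprimed halves of the double vocabulary.
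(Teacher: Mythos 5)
Your proposal is correct and follows essentially the same route as the paper: both parts invoke Lemma~\ref{lem:morphism-nons5} with $g$ defined as the $s^\text{new}$ recipe from Definition~\ref{def:transform} (composed with $g_\mathcal{M}$ and $\ell$ for part (ii)), and C1--C3 are verified by the same substitution bookkeeping, splitting $\Omega_i^\text{new}$ into the $[V_-/V_-^\circ]$-substituted old observation and the $\Omega_i^+$ conjunct. Your part (ii) is in fact slightly more explicit than the paper's, which leaves the verification of C1--C3 there as an exercise.
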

\begin{proof}
By Lemma~\ref{lem:morphism-nons5}. We first need appropriate functions $g$.

For part (i), $g$ needs to map
worlds of $\mathcal{M}(\mathcal{F})\times\mathsf{Act}(\mathcal{X})$,
  i.e.~pairs $(s,x) \in \mathcal{P}(V) \times \mathcal{P}(V^+)$
to states of $\mathcal{F} \times \mathcal{X}$,
  i.e.~subsets of $V \cup V^+ \cup V_-^\circ$.
Let
  $g(s,x):= (s \setminus V_-) \cup {(s \cap V_-)}^\circ \cup x \cup \{ p \in V_- \mid s \cup x \vDash \theta_-(p) \}$
which is exactly $s^\text{new}$ from Definition~\ref{def:transform} above.
We now prove C1 to C3 from Lemma~\ref{lem:morphism-nons5}.

For C1, take any two worlds $(s,x)$ and $(t,y)$.
We need to show $g(s,x) (g(t,y))' \vDash \Omega_i^\text{new}$ iff $\mathcal{R}_i^\text{new}(s,x)(t,y)$.
For this, note the following equivalences.
We have $g(s,x) (g(t,y))' \vDash \Omega_i^\text{new}$ iff
\[ \begin{array}{rl}
       & (s \setminus V_-) \cup {(s \cap V_-)}^\circ \cup x \cup \{ p \in V_- \mid s \cup x \vDash \theta_-(p) \} \\
  \cup & ( (t \setminus V_-) \cup {(t \cap V_-)}^\circ \cup y \cup \{ p \in V_- \mid t \cup y \vDash \theta_-(p) \})' \\
\vDash & \left[V_-/V_-^\circ\right]\left[(V_-)'/(V_-^\circ)'\right]\Omega_i \land \Omega_i^+ \\
\end{array} \]
Here $V_-$ and $V_-'$ do not occur in the formula, as old epistemic relations
do not depend on new values of modified propositions.
Hence we can drop the subsets of $V_-$ and $V_-'$ to obtain the equivalent condition
\[ (s \setminus V_-) \cup {(s \cap V_-)}^\circ \cup x
  \cup (t \setminus V_-)' \cup (t^\circ \cap V_-^\circ)' \cup y'
  \vDash  \left[V_-/V_-^\circ\right]\left[(V_-)'/(V_-^\circ)'\right]\Omega_i \land \Omega_i^+ \]
in which we can split both sides into separate vocabularies:
\[ (s \setminus V_-) \cup {(s \cap V_-)}^\circ
  \cup (t \setminus V_-)' \cup (t^\circ \cap V_-^\circ)'
  \vDash   \left[V_-/V_-^\circ\right]\left[(V_-)'/(V_-^\circ)'\right]\Omega_i
  \text{ and }
  x \cup y' \vDash \Omega_i^+ \]
Now undo the $\circ$-substitution on both sides in the first conjunct to see that it is equivalent to $s \cup t' \vDash \Omega_i$.
Hence the whole condition is equivalent to
  $\mathcal{R}_i s t \text{ and } R_i x y$
which is exactly
  $ \mathcal{R}_i^\text{new}(s,x)(t,y)$
by Definition of $\mathcal{M}(\cdot)$ and Definition~\ref{def:Act}.

To show C2, take any $(s,x)$ and any $p\in V$.
We have to show that $p \in g(s,x)$ iff $p \in \pi^\text{new}(s,x) = \{ p \in V \mid \mathcal{M},s \vDash \mathsf{post}_x(p) \}$.
There are two cases.
  First, if $p \notin V_-$, then $\mathsf{post}_x(p)=p$ by Definition~\ref{def:Act} and we directly have
    $p \in g(s,x)$ iff $p \in s$ iff $\mathcal{M},s \vDash p$ iff $p \in \pi^\text{new}(s,x)$.
  Second, if $p \in V_-$, then
    $p \in g(s,x)$ iff $s \cup x \vDash \theta_-(p)$ by definition of $g$
    and $\mathsf{post}_x(p)= \left[x/\top\right]\left[(V^+ \setminus x)/\bot\right] \theta_-(p)$ by Definition~\ref{def:Act}.
    Hence we have a chain of equivalences: $p \in g(s,x)$
    iff $s \cup x \vDash \theta_-(p)$
    iff $s \vDash \left[x/\top\right]\left[(V^+ \setminus x)/\bot\right] \theta_-(p)$
    iff $\mathcal{M},s \vDash \left[x/\top\right]\left[(V^+ \setminus x)/\bot\right] \theta_-(p)$
    iff $p \in \pi^\text{new}(s,x)$.

For C3, take any $s^\text{new} \subseteq V \cup V^+ \cup V_-^\circ$.
  We want to show that $s^\text{new} \vDash \theta^\text{new}$ iff there is an $(s,x)$ such that $g(s,x) = s^\text{new}$.

For left-to-right, suppose $s^\text{new} \vDash \theta^\text{new}$,
  i.e.~$s^\text{new}  \vDash  \left[V_-/V_-^\circ\right]\left(\theta \land \|\theta^+\|_\mathcal{F} \right) \ \land \ \bigwedge_{q \in V^-} \left(q \leftrightarrow \left[V_-/V_-^\circ\right](\theta_-(q))\right)$.
  Now first, let $s^\text{old} := (s^\text{new} \cap \setminus V_-) \cup \{ p \in V_- \mid p^\circ \in s^\text{new} \}$.
  We then have $s^\text{old} \vDash \theta$, i.e.~$s^\text{old}$ is a state of $\mathcal{F}$ and thus by the definition of $\mathcal{M}(\cdot)$ also a world of $\mathcal{M}(\mathcal{F})$.
  Second, let $x := s^\text{new} \cap V^+$ and note that $s \cup x \vDash \| \theta^+ \|_\mathcal{F}$.
  It can now  be checked that $g(s,x)=s^\text{new}$.

For right-to-left, suppose we have an $(s,x)$ such that $g(s,x)=s^\text{new}$.
  Then we want to show
  $(s \setminus V_-) \cup {(s \cap V_-)}^\circ \cup x \cup \{ p \in V_- \mid s \cup x \vDash \theta_-(p) \}  \vDash  \left[V_-/V_-^\circ\right]\left(\theta \land \|\theta^+\|_\mathcal{F} \right) \ \land \ \bigwedge_{q \in V^-} \left(q \leftrightarrow \left[V_-/V_-^\circ\right](\theta_-(q))\right)$
  which indeed follows from $s \vDash \theta$ and Definition~\ref{def:Act}.

\smallskip

For part (ii), $g$ should map worlds of $\mathcal{M}\times\mathcal{A}$ to states of $\mathcal{F}(\mathcal{M}) \times {\mathsf{Trf}(\mathcal{A})}$.
Again we use $s^\text{new}$, but $s$ and $x$ are given by propositional encodings $g_\mathcal{M}(w)$ and $\ell(a)$.
Let $g(w,a):= (g_\mathcal{M}(w) \setminus V_-) \cup {(g_\mathcal{M}(w) \cap V_-)}^\circ \cup \ell(a) \cup \{ p \in V_- \mid g_\mathcal{M}(w) \cup \ell(a) \vDash \theta_-(p) \}$.
We leave checking C1 to C3 as an exercise to the reader --- the proofs are very
similar to those in part (i).
\qed
\end{proof}

\section{Symbolic Sally-Anne}\label{sec:sallyanne}

The Sally-Anne false belief task is a famous example used to illustrate and test for a theory of mind.
The basic version goes as follows (adapted from~\cite{BaronCohen198537}):

\begin{quote}
Sally has a basket, Anne has a box.
Sally also has a marble and puts it in her basket.
Then Sally goes out for a walk.
Anne moves the marble from the basket into the Box.
Now Sally comes back and wants to get her marble.
Where will she look for it?
\end{quote}

To answer this, one needs to realize that Sally did not observe that the marble
was moved and will thus look for it in the basket. We now translate the first
DEL modeling of this story from~\cite{Bolander2014:SeeingIsBelieving} to our framework.
This choice is also motivated by a recent interest in the complexity of theory
of mind~\cite{IvdPol:HowDiffToM,PolRooSzy15:ParCompToM} where our symbolic
representation might provide a new perspective.
For simplicity we adopt the naive modeling given in~\cite{Bolander2014:SeeingIsBelieving},
leaving it as future work to also adopt the refinement with edge-conditions
and other improvements of the model.

We use the vocabulary $V = \{p,t\}$ where
  $p$ means that Sally is in the room and
  $t$ that the marble is in the basket.
In the initial scene Sally is in the room, the marble is not in the basket and
both of this is common knowledge:
\[ (\mathcal{F}_0,s_0) = (( V=\{p,t\},\ \theta=(p \land \lnot t),\ \Omega_\text{S}=\top,\ \Omega_\text{A}=\top),\{p\}) \]

The sequence of events is:

\begin{itemize}
  \item[$\mathcal{X}_1$:] Sally puts the marble in the basket:
    $((\varnothing, \top, \{t\}, \theta_-(t)=\top, \top, \top), \varnothing)$.
  \item[$\mathcal{X}_2$:] Sally leaves:
    $((\varnothing, \top, \{p\}, \theta_-(p)=\bot, \top, \top), \varnothing)$.
  \item[$\mathcal{X}_3$:] Anne puts the marble in the box, not observed by Sally:\newline
    $((\{q\}, \top, \{t\}, \theta_-(t)= (\lnot q \to t) \land (q \to \bot), \lnot q', q \leftrightarrow q'), \{q\})$.
  \item[$\mathcal{X}_4$:] Sally comes back:
    $((\varnothing, \top, \{p\}, \theta_-(p)=\top, \top, \top), \varnothing)$.
\end{itemize}

We calculate the result in Figure~\ref{fig:sallyanne}, using Lemma~\ref{lem:mini} to remove superfluous variables.
Note that all operations are boolean.
Finally, we can check that in the last scene Sally believes the marble is in the basket:
\[ \begin{array}{rl}
     & \{p,q\} \vDash \Box_\text{S} t \\
\iff & \{p,q\}  \vDash  \forall V' (\theta' \to (\Omega_\text{S} \to t')) \\
\iff & \{p,q\}  \vDash  \forall \{p',t',q'\} ((t' \leftrightarrow \lnot q') \land p'  \to (\lnot q' \to t')) \\
\iff & \{p,q\}  \vDash \top
\end{array} \]

\begin{figure}
\[ \begin{array}{lll}
       & (( \{p,t\}, (p \land \lnot t), \top, \top),p) & \mathcal{F}_0 \\
\times & ((\varnothing, \top, \{t\}, \theta_-(t)=\top, \top, \top), \varnothing) & \mathcal{X}_1 \\
     = & (( \{ p,t,t^\circ \}, (p \land \lnot t^\circ) \land t, \top, \top),\{p,t\}) \\
\\
\times & ((\varnothing, \top, \{p\}, \theta_-(p)=\bot, \top, \top), \varnothing) & \mathcal{X}_2 \\
= & (( \{ p,t,t^\circ,p^\circ \}, (p^\circ \land \lnot t^\circ) \land t \land \lnot p, \top, \top),\{ t,p^\circ \}) \\
\equiv_V & (( \{p,t\}, t \land \lnot p, \top, \top),\{t\}) \\
\\
\times & ((\{q\}, \top, \{t\}, \theta_-(t)= (\lnot q \to t) \land (q \to \bot), \lnot q', q \leftrightarrow q'), \{q\}) & \mathcal{X}_3 \\
= & (( \{ p,t,q,t^\circ \}, t^\circ \land \lnot p \land (t \leftrightarrow ( (\lnot q \to t^\circ) \land (q \to \bot) ) ), \lnot q', q \leftrightarrow q'),\{q\}) \\
= & (( \{ p,t,q,t^\circ \}, t^\circ \land \lnot p \land (t \leftrightarrow \lnot q), \lnot q', q \leftrightarrow q'),\{q\}) \\
\equiv_V & (( \{p,t,q\}, \lnot p \land (t \leftrightarrow \lnot q), \lnot q', q \leftrightarrow q'),\{q\}) \\
\\
\times & ((\varnothing, \top, \{p\}, \theta_-(p)=\top, \top, \top), \varnothing) & \mathcal{X}_4 \\
= & (( \{ p,t,q,p^\circ \}, \lnot p^\circ \land (t \leftrightarrow \lnot q) \land p, \lnot q', q \leftrightarrow q'),\{p,q\}) \\
\equiv_V & (( \{p,t,q\}, (t \leftrightarrow \lnot q) \land p, \lnot q', q \leftrightarrow q'),\{p,q\}) \\
\end{array} \]
\caption{Sally-Anne on belief structures and transformers.}\label{fig:sallyanne}
\end{figure}

\section{Related and Future Work}\label{sec:future}

We generalized knowledge transformers from~\cite{BEGS17:SMCDELbeyond} to
\emph{belief transformers with factual change}. The result is a new symbolic
representation of action models with postconditions that can be implemented
using binary decision diagrams~\cite{Bryant86:bdds}.

As mentioned above, restricting postconditions to boolean formulas does not
limit the expressivity. The authors of~\cite{DitmarschKooi2008:SemanticResults}
in fact prove the stronger result that postconditions can be restricted to
$\top$ and $\bot$. Hence one can also model postconditions as functions of the
type $A \to \mathcal{P}(V)$ as done in~\cite{Bolander2014:SeeingIsBelieving}. We leave it
as future work to tune the definition of transformers in a similar way.

An alternative ``succinct'' representation for Kripke models and action models
was recently developed in~\cite{CarSchwar2017:succinctDEL}.
Succinct models also describe sets of worlds with boolean formulas, but instead
of observational variables or boolean formulas over a double vocabulary they
use \emph{mental programs} to encode accessibility relations.
Notably, model checking DEL is still in PSPACE when models and actions are
represented succinctly.
No complexity is known for our structures and transformers so far, but we
expect it to be the same as for succinct models and actions.

Finally, the presented ideas are of course meant to be implemented. A natural
next step therefore is to extend SMCDEL~\cite{MG:SMCDEL}, the implementation
of~\cite{BEGS17:SMCDELbeyond}, with the presented transformers, working towards
a symbolic model checker covering the whole \emph{Logic of Communication and
Change} from~\cite{BenEijKoo06:lcc}.
This work has been started and experimental modules including the Sally-Anne
example are now available at \url{https://github.com/jrclogic/SMCDEL}.

Additionally, an implementation of~\cite{CarSchwar2017:succinctDEL} would be
interesting to compare the performance of both approaches.
Benchmark problems can be taken from both the DEL and the cognition literature,
see for example~\cite{PolRooSzy15:ParCompToM}.

\subsubsection*{Acknowledgements.}
Many thanks to Fernando~R.~Velázquez~Quesada, Jan van Eijck and the anonymous
reviewers for helpful comments and suggestions.

\bibliographystyle{splncs03}
\bibliography{smfc}

\end{document}